\documentclass[11pt]{article}
\usepackage[top=50mm, bottom=50mm, left=50mm, right=50mm]{geometry}
\usepackage{tocbibind}
\usepackage{amssymb}
\usepackage{amsmath}
\usepackage{amsthm}
\usepackage{epsfig}
\usepackage{graphicx}
\usepackage{graphics}
\usepackage{float}
\usepackage{subfigure}
\usepackage{multirow}
\usepackage{color}
\usepackage{lineno}
\usepackage{fullpage}
\usepackage[normalem]{ulem} 
\usepackage{makeidx}
\usepackage{xspace}
\usepackage{wrapfig}
\makeindex

\newtheorem{theorem}{Theorem}

\newtheorem{corollary}{Corollary}

\newtheorem{definition}{Definition}

\newtheorem{note}{Note}

%%%%%%%%%%%%%%% macros by yusu %%%%%%%%%%%%%
\newtheorem{lemma}{Lemma}

\definecolor{darkred}{rgb}{1, 0.1, 0.3}
\definecolor{darkblue}{rgb}{0.1, 0.1, 1}
\definecolor{darkgreen}{rgb}{0,0.6,0.5}

\newcommand {\mm}[1] {\ifmmode{#1}\else{\mbox{\(#1\)}}\fi}

 %ancerster 
 %ancerster 

 %GH distance

 %{\kappa}

\newcommand{\myTo}	{{{T}_{1}^{f}}}
\newcommand{\myTt}	{{{T}_{2}^{g}}}

 %{{levelC}}
%common ancester

\newcommand{\uTo}		{{|T_1^f|}}
\newcommand{\uTt}		{{|T_2^g|}}

%%%%%%%%%%%%%% macros by yusu end %%%%%%%%%%%%%'
\bibliographystyle{plainurl}% the mandatory bibstyle

\begin{document}
\title{Frechet-Like Distances between Two Merge Trees}
\author{Elena Farahbakhsh Touli }

%\begin{document}
\maketitle
%\linenumbers
\setcounter{page}{0}

\begin{abstract}
The purpose of this paper is to extend the definition of Frechet distance which measures the distance between two curves to a distance (Frechet-Like distance) which measures the similarity between two rooted trees. The definition of Frechet-Like distance is as follows: Tow men start from the roots of two trees. When they reach to a node with the degree of more than $2$, they construct $k-1$ men which $k$ is the outgoing degree of the node and each man monitor a man in another tree (there is a rope between them). The distance is the minimum length of the ropes between the men and the men whom are monitored and they all go forward (the geodesic distance between them to the root of the tree increases) and reach to the leaves of the trees.
Here, I prove that the Frechet-Like distance between two trees is SNP-hard to compute.

I modify the definition of Frechet-Like distance to measure the distance between tow merge trees, and I prove the relation between the interleaving distance and the modified Frechet-Like distance.
\end{abstract}

\newpage
\section{Introduction}
In this paper I are interested in extending the definition of the Frechet distance between curves to a distance between two trees. 

Frechet distance between curves is a distance for measuring the similarity between two curves. For the first time Frechet distance was defined by Maurice Fréchet \cite{CtFdbtpc, f15, f17}. Later, Frechet distance attracted attention and was worked on by other people \cite{CtdFdis, TkFd, CtFdbtpc, f1, CtFdwsiN}. 

The intuitive definition of Frechet distance between two curves is as follows: A man and his dog start from the starting points of two curves and  a leash connects the dog to the man. They can only go forward. The Frechet distance between the curves is the minimum length of the leash that the man and the dog start from the beginning of the curves and they reach to the end of the curves without separating the leash. In the following I write the mathematical definition of the Frechet distance between two curves \cite{CtFdbtpc}.

% \noindent{\bf Frechet Distance.\cite{f1}} 
\begin{definition}\cite{CtFdbtpc}
Suppose that I have two curves $C_1: [a,b]\longrightarrow V$ and $C_2:  [a',b']\longrightarrow V$,  such that $a<b$ and $a'< b'$ and $V$ is a vector space. The Frechet distance between $C_1$ and $C_2$ is defined as the infimum distance over all continuous increasing functions $\alpha: [0,1]\longrightarrow[a,a']$ and $\beta: [0,1] \longrightarrow[b,b']$ that maximizes the distance between $C_1(\alpha(t))$ and $C_2(\beta(t))$ on $t\in [0,1]$. In this case, the Frechet distance is defined as follows: 
$$d_F (C_1, C_2)= \inf_{\alpha, \beta}\max_{t\in[0,1]}\{d(C_1(\alpha(t)), C_2(\beta(t)))\}.$$
\end{definition}

Weak Frechet distance is a special kinds of Frechet distance such that the man and the dog can go backward as well \cite{CtFdbtpc}. Both Frechet distance and weak Frechet distance can be found in a polynomial time between two polygonal curves \cite{CtFdbtpc}, but it is NP-hard to compute the Frechet distance between two surfaces \cite{Ontcomtbgoihd} and till now no one has defined Frechet distance between trees. Discrete Frechet distance was discussed by T. Eiter and H. Mannila in 1994 \cite{CdFd}. In 2012, P.K.Agrawal, etc. found an algorithm to find the discrete Frechet distance between two polygonal curves in sub-quadratic time. \cite{CtdFdis}

 \noindent{\bf New work.} In this paper I will extend the definition of the Frechet distance between curves to define a similar distance between rooted trees.

This is the first time that the Frechet distance is defined between trees. I call it Frechet-Like distance because of the similarity of this definition to the Frechet distance between curves. The intuitive definition of the Frechet-Like distance is as follows:
Two men $A$ and $B$ start form the roots of merge trees $T_1$ and $T_2$ respectively and there is a rope between them. When they reach to a node (a vertex with the degree of greater than $2$) each of them construct $(k-1)$ men  similar to themselves, which $k$ is the outgoing degree of the node. Each man from $T_1$ ($T_2$) is assigned to walk simultaneously (there is a rope between them) with a man in $T_2$ ($T_1$); he can stop in a node or go forward with a man from $T_2$ ($T_1$) or the man in $T_2$ ($T_1$) stops and another one goes forward. A man from $T_1$ ($T_2$) can walk with more than one man from $T_2$ ($T_1$) only if the distance between them is not more than $\varepsilon$ and the man in $T_1$ ($T_2$) stops in one point. Also if two men $A$ and $B$ go simultaneously with two men $A'$ and $B'$, respectively, there is a rope between the nearest common ancestor or $A$ and $B$ and also the nearest common ancestor of $A'$ and $B'$.  The problem is finding the minimum length of the rope that the man can start from the beginning of the tree $T_1$ and another from $T_2$ and there is at least a man to reach to any leaf of $T_1$ and a man that reaches to any leaf of $T_2$. A man should either construct other men or reach to a leaf. In this definition length of the rope with one end point in $x\in |T_1|$ and one end point in $y\in |T_2|$ is defined as $d(x,y)$. Here I consider the Euclidean distance as the distance between $x$ and $y$. 

Later I modify the definition of Frechet-Like distance to a definition between two merge trees.
By considering the merge trees $\myTo$ and $\myTt$, I prove the relation between the modified Frechet-Like distance  and  the interleaving distance between two merge trees. 
\begin{definition}
Merge tree. \cite{Idbmt,Fafcghaidbt}

A merge tree is a rooted tree with a function which is defined on each point of the tree. A merge tree $T^h$ is defined by a pair $(T,h)$ such that $h: |T|\longrightarrow \mathbb{R}$ is a monotone function which means that if for $x,y\in |T|$ $x< y\footnote{$x < y$ is that $x$ is a descendant of $y$}$, $h(x)<h(y)$. 

Intuitively I can define a merge tree $(T,h)$  as follows: consider a tree and a node of the tree as the node $u$. Hang the tree from the node. I consider the function value $h(u) = 0$ for $u$ that I hang the tree from and for all the other points in the tree, the function of each point of the merge tree $T^h_u$ will be the negative distance between the node $u$ and the point.
\end{definition}

The outcome of this paper is as follows: The distance between two trees is discussed in section \ref{DbT}. In section \ref{FLDbMT}, I define the Frechet-Like distance between trees, both the intuition and mathematical definition of Frechet-Like distance. In section \ref{AoFLDiiN}, I prove that it is NP-hard to approximate the Frechet-Like distance between rooted trees. Section \ref{d} is considered for modifying the Frechet-like distance between two merge trees. I also  prove the relation between the interleaving distance and the modified Frechet-like distance between two merge trees in this section. Section \ref{C} is the conclusion.

% \noindent{\bf Distance between Trees.}

\section{Distance between Trees}\label{DbT}
Distance between trees is one of the topics that has been discussed in the previous years \cite{f12,f14,Aotaatte,Idbmt, Fafcghaidbt}. The tree edit distance and the tree alignment distance are two well-known distances which were defined between trees \cite{Aotaatte}. Both the tree edit distance and the tree alignment distance between two trees are MAX SNP-hard to compute. There is a polynomial time algorithm for computing the tree alignment distance between two ordered trees\footnote{Ordered tree is a rooted tree that there is an order between the children of each node \cite{Aotaatte}.} if we bound the degree of each node, however there is no known polynomial algorithm for finding the edit distance between ordered trees with bounded degrees. There is a polynomial algorithm for computing the tree edit distance between trees if we consider trees with bounded depth \cite{Aotaatte}.

\begin{definition}
Tree edit distance \cite{Aotaatte}.

Consider two labeled trees $T_1$ and $T_2$. The tree edit distance is the minimum cost of changing one tree to another one by using three editing operations add, remove and rename. 
\end{definition}

\begin{definition}
Tree alignment distance \cite{Aotaatte}.

Consider two labeled trees $T_1$ and $T_2$. The alignment distance between the two trees is obtained as follows: 
first I add nodes to $T_1$ and $T_2$ that the modified trees $T'_1$ and $T'_2$ have the same structures. The related cost would be the the cost of changing the labels that two trees $T'_1$ and $T'_2$ have also same labels. The minimum cost related to the best structural changes is the alignment distance.
\end{definition}

Two following notes are satisfied about the tree edit distance and tree alignment distance from \cite{Aotaatte} and \cite{Fafcghaidbt} respectively.

\begin{note}
Tree alignment distance is always greater than or equal to tree edit distance. For more illumination, look at Figure \ref{fig:trees} (b).
\end{note}

\begin{note}
Although there is a polynomial time algorithm for finding tree alignment distance between ordered labeled trees, tree alignment distance cannot capture similarities between trees. Figure \ref{fig:trees} (a) illustrates this better. 
\end{note}

\begin{figure}[tbph] 
\begin{center}
\begin{tabular}{ccccc}
\includegraphics[height=4cm]{./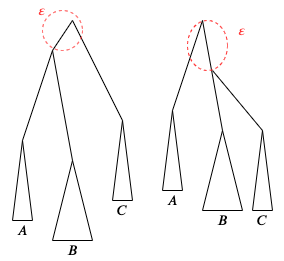} &\hspace*{0.15in} &
 \includegraphics[height=3.5cm]{./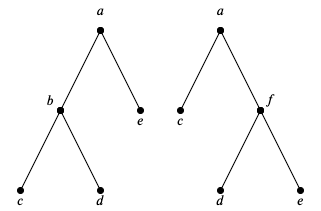} & \hspace*{0.15in}
& 
\includegraphics[height=4cm]{./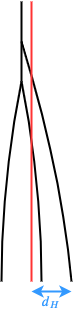} \\
(a) & & (b) & & (c) 
\end{tabular}
\end{center}
\vspace*{-0.5cm}
\caption{
(a) Two trees are very similar to each other, but the alignment distance between them is very large, because tree alignment distance is sensible to the relationship between children and parents.
(b) If the costs of relabeling, removing and adding nodes is $1$, tree edit distance between two trees is 2, and tree alignment distance between them is 4.
(c) Two trees (red color and black color trees) are completely different, however the Hausdorff distance between them is small. 
\label{fig:trees}}
\end{figure}

Another distance that we can consider between trees is Hausdorff distance \cite{f1}. Hausdorff distance is defined between two sets of points. The Hausdorff distance is defined as follows: 

\begin{definition}  Hausdorff Distance \cite{f1}. 

For given sets $S_1$ and $S_2$ in a space, for each point $s$ in $S_1$ we find the closest point to it in $S_2$ (as $s'$), and for each point in $S_2$ we find the closest point to it in the set $S_1$. The Hausdorff distance is the maximum over all distances that we find. The mathematical definition of Hausdorff distance is as follows:

$$
d_H(S_1,S_2)=\max\left\{ \sup_{s\in S_1} \inf_{s'\in S_2} \text{d}(s,s'),\sup_{s'\in S_2} \inf_{s\in S_1}\text{d}(s,s')\right\}.
$$
\end{definition}
If we consider the underlying space of trees on $\mathbb{R}^2$, we can define Hausdorff distance between two trees. 
However, the Hausdorff distance cannot capture dissimilarities between trees. For example in Figure \ref{fig:trees} the two trees are very different, however the Hausdorff distance between them is very small.

Another distance that we can consider between trees is interleaving distance\cite{Idbmt, Fafcghaidbt}. Interleaving distance  is defined between merge trees. Interleaving distance between two merge trees $\myTo$ and $\myTt$ is defined by two continuous functions $\alpha$ and $\beta$ and the definition is as follows: 

\begin{definition}\label{dI1}\cite{f12, Idbmt, Fafcghaidbt}
Interleaving distance between two merge trees $\myTo$ and $\myTt$ is defined as follows: 
$$
d_{I}(\myTo, \myTt) = inf\{\delta \text{ s.t. there is a pair of $\delta$-compatible maps between $\myTo$ and $\myTt$}\},
$$
where two continuous maps $\alpha_\delta: |\myTo|\longrightarrow |\myTt|$ and $\beta_\delta: |\myTt|\longrightarrow |\myTo|$ are $\delta$-compatible if and only if the following conditions are satisfied:
\newline
(1) For all $u\in|\myTo|$, $g(\alpha_\delta(u))=f(u) + \delta$, 
\newline
(2) For all $v\in|\myTt|$, $f(\alpha_\delta(v))=g(v) + \delta$,
\newline
(3) For all $u_1, u_2\in|\myTo|$ s.t. $f(u_1)=f(u_2)$, $\beta_{\delta} o \alpha_\delta(u_1) = \beta_{\delta} o \alpha_\delta(u_2)) = u_1^{2\delta}$,
\newline
(4) For all $v_1, v_2\in|\myTt|$ s.t. $g(v_1)=g(v_2)$, $\alpha_{\delta} o \beta_{\delta}(v_1)= \alpha o\beta_\delta(v_2) = v_1^{2\delta}$.
\end{definition}

In \cite{f12}, P. K. Agrawal, etc., proved that it is NP-hard to compute interleaving distance between two merge trees and it concludes the fact that it it NP-hard to compute the Gromov-Hausdorff distance between trees within a factor of better than $3$. Later in 2019 E. Farahbakhsh and Y. Wang \cite{Fafcghaidbt} defined one $\varepsilon$-good map from $\myTo$ to $\myTt$ which is defined as follows: 

\begin{definition}\label{ID}\cite{Fafcghaidbt}
A map $\alpha^{\delta}: |\myTo| \longrightarrow |\myTt|$ is called $\delta$-good map if and only if the following conditions are satisfied: 
\newline
(C1) $\alpha^{\delta}$ is continuous,
\newline
(C2) For every point $u\in|\myTo|$, $g(\alpha^{\delta}(u))= f(u) + \delta$,
\newline
(C3) For every pair of points $v_1 = \alpha^{\delta}(u_1)$ and $v_2 = \alpha^{\delta}(u_2)$, if $v_1\geq v_2$, $u_1^{2\delta}\footnote{$u_1^{2\delta}$ is an ancestor of $u_1$ in $\myTo$ such that $f(u_1^{2\delta}) - f(u_1)= 2\delta$}\geq u_2^{2\delta}$,
\newline
(C4) If there is a point $v\in |\myTt|$ which is not in the image of $\alpha^{\delta}$, $f(v^F\footnote{$v^F$ is the nearest ancestor of $v$ such that $v^F$ is in the image of $\alpha^{\delta}$.}) - f(v)\leq 2\delta$.
\end{definition}
and by the definition of $\delta$-good map, they proved the following Theorem: 
\begin{theorem}\cite{Fafcghaidbt}
$d_I(\myTo, \myTt)\leq \delta$ if and only if there is a $\delta$-good map $\alpha^{\delta}:|\myTo|\longrightarrow |\myTt| $.
\end{theorem}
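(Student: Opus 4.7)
The statement is a biconditional, and in both directions the guiding idea is to let the map $\alpha^\delta$ of Definition~\ref{ID} coincide with the map $\alpha_\delta$ of Definition~\ref{dI1}; the real content is then (a) deriving the ancestor-monotonicity and coverage conditions (C3), (C4) from the existence of a partner $\beta_\delta$, and (b) conversely reconstructing $\beta_\delta$ from $\alpha^\delta$ alone. The height-shift axioms (C1), (C2) and (1), (2) are essentially identical in both definitions, so they translate trivially in both directions.

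For the forward implication, fix a pair of $\delta$-compatible maps $(\alpha_\delta,\beta_\delta)$ and set $\alpha^\delta:=\alpha_\delta$. Given $v_1=\alpha^\delta(u_1)\geq v_2=\alpha^\delta(u_2)$, take the monotone ancestor-path in $|\myTt|$ from $v_2$ up to $v_1$ and push it through $\beta_\delta$; by condition (3) of Definition~\ref{dI1} the endpoints become $u_2^{2\delta}$ and $u_1^{2\delta}$, and the height along the image path is strictly monotone (since $\beta_\delta$ shifts heights by $+\delta$), so $u_1^{2\delta}\geq u_2^{2\delta}$, giving (C3). For (C4), if $v\notin\Img(\alpha^\delta)$ then condition (4) of Definition~\ref{dI1} gives $\alpha_\delta(\beta_\delta(v))=v^{2\delta}$, placing $v^{2\delta}$ inside $\Img(\alpha^\delta)$; since $v^{2\delta}$ is an ancestor of $v$ at height $g(v)+2\delta$, the nearest image-ancestor $v^F$ satisfies $g(v^F)-g(v)\leq 2\delta$.

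For the reverse implication, fix a $\delta$-good map $\alpha^\delta$ and set $\alpha_\delta:=\alpha^\delta$. Build $\beta_\delta:|\myTt|\to|\myTo|$ piecewise. For $v\in\Img(\alpha^\delta)$, pick any preimage $u$ and set $\beta_\delta(v):=u^{2\delta}$; (C3) applied to coincident preimages forces this to be independent of the choice of $u$. For $v\notin\Img(\alpha^\delta)$, (C4) produces an image-ancestor $v^F$ within height $2\delta$ of $v$, and with $u$ a preimage of $v^F$ the point $u^{2\delta}$ lies at height $g(v^F)+\delta\geq g(v)+\delta$; I would then define $\beta_\delta(v)$ to be the descendant of $u^{2\delta}$ at height $g(v)+\delta$ selected by continuously extending $\beta_\delta$ along the connected component of $|\myTt|\setminus\Img(\alpha^\delta)$ containing $v$. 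Conditions (1), (2) of Definition~\ref{dI1} then transfer from (C1), (C2); condition (3) is built into the definition of $\beta_\delta$ on the image; condition (4) reduces to checking $\alpha^\delta(\beta_\delta(v))=v^{2\delta}$ on the complement of the image.

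The hardest step is this last construction of $\beta_\delta$ on $|\myTt|\setminus\Img(\alpha^\delta)$. A single connected component can sit below an image-ancestor $v^F$ whose preimage lies in a subtree of $\myTo$ with multiple branches above $u^{2\delta}$, and a consistent choice of descendant must be made to keep $\beta_\delta$ continuous. I expect (C3) to supply the needed rigidity — any two preimages of $v^F$ share the same $2\delta$-ancestor, which pins down the choice up to branch ambiguity — so that the ambiguity can be resolved one component at a time. Once continuity is secured, verifying condition (4) of Definition~\ref{dI1} on the complement amounts to tracing monotone paths downward from $u^{2\delta}$ under $\alpha^\delta$ and matching them to the image-ancestor structure near $v$; this bookkeeping is where the bulk of the technical work will lie.
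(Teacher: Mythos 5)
The paper does not actually prove this theorem---it is imported verbatim from \cite{Fafcghaidbt}---so there is no in-paper argument to compare against; your proposal has to stand on its own. Its architecture (identify $\alpha^\delta$ with $\alpha_\delta$ in one direction, reconstruct $\beta_\delta$ from $\alpha^\delta$ in the other) is the right one and is the strategy used in the cited source. The forward direction is essentially complete: pushing the ancestor path from $v_2$ to $v_1$ through $\beta_\delta$ and using strict monotonicity of height to conclude the image is an ancestor path correctly yields (C3), and $\alpha_\delta(\beta_\delta(v))=v^{2\delta}\in\Img(\alpha^\delta)$ correctly yields (C4). (You are implicitly using the standard form of conditions (3)--(4), namely that $\beta_\delta\circ\alpha_\delta$ and $\alpha_\delta\circ\beta_\delta$ are the $2\delta$-shifts at \emph{every} point; the paper's transcription of Definition~\ref{dI1} restricts these to pairs of equal height and has $\alpha_\delta$ where $\beta_\delta$ should appear, so you should say explicitly which version you are working with.)

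The reverse direction, however, contains a genuine gap that you flag but do not close, and your proposed repair does not quite aim at the right target. Defining $\beta_\delta(v)$ for $v\notin\Img(\alpha^\delta)$ as ``some descendant of $u^{2\delta}$ at height $g(v)+\delta$, chosen by continuous extension'' does not by itself guarantee condition (4) of Definition~\ref{dI1}: you need $\alpha^\delta(\beta_\delta(v))=v^{2\delta}$, and a continuously chosen descendant of $u^{2\delta}$ can perfectly well be sent by $\alpha^\delta$ to the $(g(v)+2\delta)$-level ancestor of a \emph{sibling} branch of $v$ rather than to $v^{2\delta}$ itself. The construction should instead go through the target: show first that $\Img(\alpha^\delta)$ is upward-closed along ancestor paths (a consequence of (C1)--(C2), since the $\alpha^\delta$-image of an ancestor path is a connected set with strictly increasing $g$-value), deduce from (C4) that $v^{2\delta}\geq v^F$ lies in $\Img(\alpha^\delta)$, and define $\beta_\delta(v)$ to be a preimage of $v^{2\delta}$---this makes $f(\beta_\delta(v))=g(v)+\delta$ and condition (4) automatic, and concentrates all remaining work in a single place: choosing preimages consistently so that $\beta_\delta$ is continuous (where (C3) gives you agreement of $2\delta$-ancestors of preimages but not of the preimages themselves). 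Until that selection argument is written out, the reverse implication is not proved.
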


\section{Frechet-Like Distance between two Rooted Trees}\label{FLDbMT}
In this section I define the Frechet-Like distance between two rooted trees. Given two merge trees $T_1$ and $T_2$ rooted at $u$ and $v$ respectively, the definition of Frechet-Like distance is that I consider two men $A$ and $B$ who start to walk from the points $u$ and $v$ respectively. Two men are connected by using a rope. If a man reaches to a node (with a degree of higher than 2) he will copy himself $k-1$ times such that $k$ is the outgoing degree of the node. Each man at each time can monitor (there is a rope between them) just  one man at a time unless the man stops and others go far away just $\varepsilon$ distance. Here the distance between two men is defined the distance between their function value of the merge tree at those points that they are. The Frechet-Like distance is defined as the minimum distance between the point that two men are. In the following I write the mathematical definition of the Frechet-Like distance:

\begin{definition}\textbf{Frechet-Like Distance}\label{def1}

For two given rooted trees $T_1$ and $T_2$, I define Frechet-Like distance as follows: 
\begin{equation*}
    d_{FL}(\myTo,\myTt):= \min_{R\in \mathcal{R}} \sup_{(x,y)\in R}d(x,y)
\end{equation*}
$d(x,y)$ is the Euclidean distance between two points $x$ and $y$ and the correspondence $R \subseteq \uTo \times \uTt$ is defined as follows:
\\
\\
1) $\forall x\in |T_1|$, $\exists y\in |T_2|$ s.t. $(x, y)\in R$

1-i) $\forall y\in |T_2|$, $\exists x\in |T_1|$ s.t. $(x, y)\in R$\\
2) If $(x_1, y_1)\in R$ and $(x_2, y_2) \in R$ and $x_2 \geq x_1$  and $y_2\geq y_1$ then

2-i) $\forall x$ s.t. $x_1\leq x\leq x_2$, $\exists y$ s.t. $y_1\leq y\leq y_2$ and $(x, y)\in R$ and 

2-ii) $\forall y$ s.t. $y_1\leq y\leq y_2$, $\exists x$ s.t. $x_1\leq x\leq x_2$ and $(x, y)\in R$.\\
3) If $(x_1, y_1)\in R$ and $(x_2, y_2) \in R$ then $(x_1 \sim x_2\footnote{$x_1 \sim x_2$ is the nearest ancestor of $x_1$ and $x_2$},  y_1\sim y_2) \in R$.\\
4) If $x\in |T_1|$ is a leaf, there should be a leaf $y\in|T_2|$ such that $(x,y)\in R$, unless there is a $y'$ such that $(x,y')\in R$ and $(x^N\footnote{$x^N$ the nearest node which is an ancestor of $x$},y')\in R$.

4-i) If $y\in |T_2|$ is a leaf, there should be a leaf $x\in|T_1|$ such that $(x,y)\in R$, unless there is a $x'$ such that $(x',y)\in R$ and $(x',y^N)\in R$.
\end{definition}

\section{Approximation of the Frechet-Like Distance is in NP-hard}\label{AoFLDiiN}
In this section I prove that computing the Frechet like distance between two rooted trees is SNP-hard to compute by a reduction from UNRESTRICTED-PARTITION. The way that I prove that it is in SNP-hard is very similar to proving that Gromov-Hausdorff distance between two merge trees is in SNP-complete. \cite{f12}
\\
\\
\textbf{UNRESTRICTED-PARTITION.} 
\\
Input: a multiset of positive integers $\mathcal{X} = \{a_1,...,a_n\}$ such that $n = 3k$,
\\
Output: Is there a partition of $\mathcal{X}$ into $k$ multisets ${\mathcal{X}_1, . . . , \mathcal{X}_m}$ such that for each multiset $\mathcal{X}_j$ if we consider by $S_j$ the summation of elements in multiset $\mathcal{X}_j$, 
$S_j = (\sum_{i=1}^n a_i) /m$? \cite{f13}

\begin{theorem}
The problem UNRESTRICTED-PARTITION is in SNP-complete.
\end{theorem}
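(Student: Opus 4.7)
The plan is to reduce from 3-PARTITION, which is a textbook strongly NP-complete problem: given a multiset of $3m$ positive integers $\{a_1, \dots, a_{3m}\}$ with $\sum_i a_i = mB$ and $B/4 < a_i < B/2$ for every $i$, decide whether the multiset can be partitioned into $m$ triples each summing to $B$. Since the bound on the individual $a_i$ is what makes this problem strongly NP-complete (as opposed to the ordinary PARTITION problem, which is only weakly NP-complete), starting from 3-PARTITION is essential if we want SNP-hardness rather than just NP-hardness.

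First I would dispatch membership in NP. Given an UNRESTRICTED-PARTITION instance $\mathcal{X} = \{a_1, \dots, a_n\}$ with $n = 3k$, a candidate certificate is an assignment of each $a_i$ to one of $k$ multisets, which can be verified in polynomial time by summing each multiset and checking that all sums equal $(\sum_i a_i)/k$. Thus the content of the theorem is the strong hardness direction.

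For the hardness reduction, the plan is to feed a 3-PARTITION instance $(\{a_1, \dots, a_{3m}\}, B)$ verbatim as an UNRESTRICTED-PARTITION instance with the same multiset $\mathcal{X}$ and $k = m$. One direction is obvious: any 3-PARTITION solution is an UNRESTRICTED-PARTITION into $m$ groups of equal sum. The converse is the key observation: in any UNRESTRICTED-PARTITION solution each group must sum to $B$, and the 3-PARTITION bound $B/4 < a_i < B/2$ forces each group to contain strictly more than two and strictly fewer than four elements. Hence every group has exactly three elements, and the UNRESTRICTED-PARTITION solution is automatically a 3-PARTITION. Because the reduction is value-preserving---the largest integer appearing in the produced instance equals the largest integer of the original---this is a pseudo-polynomial-preserving reduction, which is exactly what is needed to transfer strong NP-hardness.

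The main obstacle, and really the only one, is the bookkeeping around the notion of SNP-completeness itself: one must be careful that the reduction is polynomial in the \emph{unary} encoding of the input (rather than merely the binary one), and this is where the value-preserving nature of the reduction, together with the strong NP-completeness of 3-PARTITION, is used. Once this is in place, the equivalence of the two problems on the 3-PARTITION family of inputs is immediate and the theorem follows.
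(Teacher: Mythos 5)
Your proof is correct. Note that the paper itself does not give an argument for this theorem at all: its ``proof'' consists solely of a citation to an external reference, so there is nothing in the paper to compare against step by step. What you supply is the standard self-contained argument that such a reference would contain: membership in NP via the obvious certificate, and strong NP-hardness by feeding a 3-PARTITION instance verbatim to UNRESTRICTED-PARTITION, where the bounds $B/4 < a_i < B/2$ force every group of an equal-sum partition into $k=m$ parts to contain exactly three elements (a group with at most two elements sums to strictly less than $B$, one with at least four sums to strictly more), so the two problems coincide on that family of instances. Your remark that the reduction is value-preserving, hence legitimate for transferring \emph{strong} NP-hardness under unary encoding, is exactly the point that needs care and that the paper glosses over entirely. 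The one cosmetic issue is that the paper's statement of UNRESTRICTED-PARTITION conflates $k$ and $m$; your reading ($k=m$, $n=3k$) is the only sensible one and matches what the hardness construction in Section 4 actually uses.
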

\begin{proof}
See (\cite{f13}).
\end{proof}

Here, I construct two merge trees $\myTo$ and $\myTt$ as follows. In the following picture $A$ and $B$ are two large numbers. 

\begin{figure}[htbp]
\begin{center}
\includegraphics[height=5cm]{D1.png}
\end{center}
\vspace*{-0.15in}
\caption{Two trees $\myTo$ and $\myTt$. $A$ and $B$ are two large numbers.
\label{fig:D1}}
\end{figure}

if I consider two merge trees $\myTo$ and $\myTt$ that their roots is located in one point, $d(x,y) = |f(x)-g(y)|$ in the Definition\ref{def1}. Now, I prove the hardness of approximation of Frechet-Like distance by the following lemmas. 

\begin{lemma}
$d_{FL}(f, g) \leq 1$ if UNRESTRICTED-PARTITION is a yes instance. 
\end{lemma}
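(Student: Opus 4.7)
My plan is constructive: starting from a valid partition $\mathcal{X}_1,\ldots,\mathcal{X}_m$ with each $S_j$ equal to $S := (\sum_i a_i)/m$, I build an explicit correspondence $R \subseteq \uTo \times \uTt$ whose width $\sup_{(x,y)\in R} d(x,y)$ is at most $1$, then verify each of conditions (1)--(4) of Definition~\ref{def1}. First I would read off the gadget structure from Figure~\ref{fig:D1}: which of $\myTo,\myTt$ houses the $n$ item branches encoding $a_1,\ldots,a_n$, which houses the $m$ bucket branches of capacity $S$, and how the constants $A,B$ are placed so that any width-$1$ correspondence is forced to pair items with buckets rather than with the trunks.

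Having fixed that, for each bucket $j$ I would pack the items $\{a_i : a_i\in\mathcal{X}_j\}$ into bucket $j$ by a monotone height-matching: traverse the bucket's spine from top to bottom and allot to each item a vertical slab of length $a_i$ in some chosen order. Because $\sum_{a_i\in\mathcal{X}_j}a_i = S$, these slabs tile the bucket exactly, so inside each slab the map has slope $1$ in function value and contributes $0$ to the width; along the shared trunk from the two roots down to the node where buckets branch off I take the identity in function value, again contributing $0$. Then conditions (1) and (1-i) are immediate by surjectivity of the construction, (2), (2-i), (2-ii) follow because the piecewise map is monotone and continuous in function value, (3) holds because the LCA of two matched items in different buckets is the bucket-junction (matched to the trunk) while the LCA of two items sharing a bucket is the top of that bucket (matched to the joining node of the corresponding items in the other tree), and (4), (4-i) are handled because each item-leaf is the bottom of its slab and each bucket-leaf is the bottom of the last slab placed, with the ``unless'' clause absorbing any remaining corner.

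The main obstacle is the junction bookkeeping at the slab boundaries: condition (3) forces two consecutive items inside a bucket to be mapped to a common point on the bucket spine, which locally costs up to one unit of function-value discrepancy. The gadget's unit-length padding between items and at the top of each bucket (the very reason the right-hand side of the lemma is $1$ and not $0$) must be shown to absorb every such junction cost, and the identity $\sum_{a_i\in\mathcal{X}_j}a_i = S$ is what guarantees that the stacked slabs fill each bucket without spillover, so no correspondence pair is ever forced further than one unit apart. I expect the cleanest presentation to be block-by-block --- trunk, bucket spine to item tops, item interiors, leaves --- checking the four conditions in that order and taking the sup along the way.
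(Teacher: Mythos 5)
Your high-level plan is the same as the paper's: take the given partition, send the item branches of group $i$ into bucket branch $i$, and check the conditions of Definition~\ref{def1}. The paper's own proof is much coarser --- it simply declares that the subtrees rooted at $u_{i_1},\dots,u_{i_{k_i}}$ are mapped to $v_i$, observes that then $(u_r,v_i)\in R$ is forced by the LCA rule, and points at Figure~\ref{fig:D2}; it never spells out a point-by-point matching inside the branches. So your attempt to make the matching explicit is in principle a service, but the particular realization you chose does not work.

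The gap is in your treatment of condition (3). You tile bucket $i$'s spine with \emph{disjoint consecutive} slabs of lengths $a_{i,1},\dots,a_{i,k_i}$ and claim the resulting LCA bookkeeping ``locally costs up to one unit'' which the padding absorbs. That is false. Take $x_1$ near the bottom of the item branch assigned to the second slab and $x_2$ near the bottom of the item branch assigned to the third slab. In $\myTo$ these are incomparable and $x_1\sim x_2$ is the common hub at the top of the branching, but their images lie on a single path of $\myTt$, so $y_1\sim y_2$ is simply the higher of the two image points, which under your tiling sits at depth about $a_{i,1}+a_{i,2}$ below the top of the bucket. Condition (3) then forces the pair (hub, point at depth $a_{i,1}+a_{i,2}$) into $R$, whose function-value gap is a \emph{partial sum} of the group's items, not a unit; it can be as large as $S$ minus one item. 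No amount of unit-length padding absorbs a cost that scales with the input integers. Note that $\myTo$ is built from the multiset alone, so its item branches all hang from one hub and you cannot rearrange them into a caterpillar matching your slab order. The paper sidesteps this precisely by keeping every branch root $u_{i_j}$ of group $i$, and their common ancestor $u_r$, paired with the single node $v_i$ at the top of the bucket, so that the pairs added by LCA-closure all have width $O(1)$; to repair your proof you would need to redesign the interior matching (and explain how the lower part of each bucket spine gets covered for condition (1-i)) rather than stack disjoint slabs.
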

\begin{proof}
If UNRESTRICTED-PARTITION is a yes instance, I can construct a correspondence $R\subseteq \uTo\times\uTt$ such that 
$\sup_{(x,y)\in R} |f(x)-g(y)|\leq1$. If UNRESTRICTED-PARTITION is a yes instance, I can partition $\mathcal{X}$ into $X_1, X_2,...,X_n$ such that $S(X_1)=\frac{S(\mathcal{X})}{k}$, and $X_i=\{a_{i,1},...,a_{i,k_i}\}$. Therefore, I map sub-trees rooted at $\{u_{i_1},...,u_{i_{k_i}}\}$ to $v_i$, such that $u_{i_j}$ corresponds to $a_{i,j}$ in the construction of the tree and $v_i$ corresponds to $X_i$. When I say that I map a point $x\in \uTo$ to a point $y\in \uTt$, we mean that $(x, y) \in R$
If $(u_{i_j},v_i)\in R$ and $(u_{i_k},v_i)\in R$, I have that $(u_r,v_i)\in R$. (For more illustration look at Figure \ref{fig:D1}) Therefore, I could construct a correspondence $R\subseteq \uTo\times\uTt$ such that 
$\sup_{(x,y)\in R} |f(x)-g(y)|<1$.
\begin{figure}[htbp]
\begin{center}
\includegraphics[height=6cm]{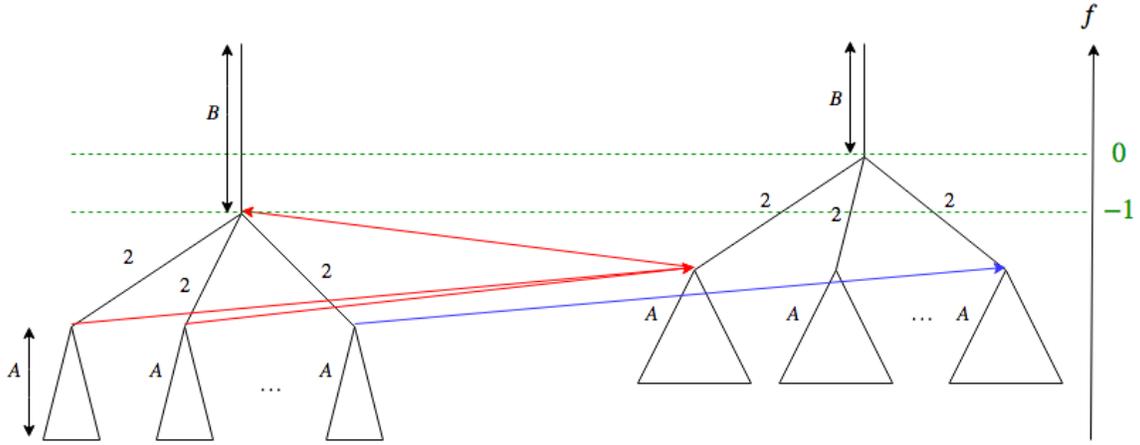}
\end{center}
\vspace*{-0.15in}
\caption{if UNRESTRICTED-PARTITION is a yes instance.
\label{fig:D2}}
\end{figure}
\end{proof}

\begin{lemma}
 If UNRESTRICTED-PARTITION is a no instance, $d_{FL}(f, g) \geq 3$.
\end{lemma}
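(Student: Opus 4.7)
The plan is to prove the contrapositive: if $d_{FL}(\myTo, \myTt) < 3$, then the given UNRESTRICTED-PARTITION instance admits a valid partition. Fix a correspondence $R \subseteq \uTo \times \uTt$ achieving $\sup_{(x,y)\in R} |f(x)-g(y)| < 3$. The goal is to read off from $R$ an equal-sum partition $\mathcal{X}_1, \ldots, \mathcal{X}_k$, contradicting the no-instance hypothesis.

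First I would establish the coarse skeleton of the matching. Because $A$ and $B$ are chosen much larger than $3$, condition (2) of Definition \ref{def1} combined with the $|f(x)-g(y)| < 3$ bound pins the root of $\myTo$ to the root of $\myTt$ and pins the height at which the $k$ principal branches of $\myTt$ fork to the corresponding fork level of $\myTo$. Then I would use the nearest-common-ancestor condition (3): if two leaves $\ell_i, \ell_j$ of $\myTo$ representing $a_i, a_j$ are sent by $R$ into two \emph{different} principal branches of $\myTt$, their LCA must be matched to the LCA of those branches in $\myTt$, which forces $\ell_i$ and $\ell_j$ to actually lie in different subtrees above the $k$-way fork of $\myTo$. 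This assigns to each leaf $\ell_i$ a unique branch index $\pi(i) \in \{1,\ldots,k\}$; setting $\mathcal{X}_j = \{a_i : \pi(i) = j\}$ and $S_j = \sum_{a \in \mathcal{X}_j} a$ gives a candidate partition.

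Next I would control $|S_j - S/k|$, where $S/k$ is the length of the $j$-th branch of $\myTt$ and $S_j$ is the total arc-length of the $\myTo$-subtree assigned to it. Monotonicity (conditions 2-i and 2-ii) together with the pointwise height gap $<3$ forces every cumulative height mismatch between matched points along the branch to stay strictly below $3$. Comparing the top and the bottom of the branch yields $|S_j - S/k| < C$, where $C$ is a constant depending only on the fixed gap $3$ and the local combinatorics of the construction, not on the $a_i$. The main obstacle I anticipate here is the bookkeeping: $R$ need not be a function in either direction, so one must carefully argue that slack picked up at internal nodes of $\myTo$ (through condition 4 and its mirror 4-i) cannot exceed the same constant $C$; this is where conditions (3) and (4) together with the $<3$ gap do the real work.

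Finally, I would exploit the freedom in the construction by taking $A$ and $B$ (equivalently, rescaling the $a_i$) so that all branch lengths are integer multiples of some $N \gg C$. Then $|S_j - S/k| < C$ combined with the integrality of $S_j$ and $S/k$ forces $S_j = S/k$ exactly for every $j$, producing a valid equal-sum partition of $\mathcal{X}$. This contradicts the no-instance hypothesis, so $d_{FL}(\myTo, \myTt) \geq 3$, as claimed. The argument parallels in spirit the hardness reduction for interleaving distance in \cite{f12}, with the LCA clause (3) of the Frechet-Like correspondence playing the role of the bijective branch-tracking used there.
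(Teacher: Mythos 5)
Your overall plan---argue the contrapositive, use the largeness of $A$ and $B$ to pin the roots and fork levels, use the LCA condition (3) to extract a well-defined assignment $\pi$ of the elements $a_i$ to the $k$ principal branches of $\myTt$, and then force the group sums $S_j$ to equal $S/k$---is the standard shape for this kind of reduction, and it leans on exactly the mechanism the paper's own (much terser, direct) proof uses: when no valid partition exists, some pair of incomparable points $x_1 \| x_2$ of $\myTo$ must be matched into different branches of $\myTt$, and condition (3) then forces $(x_1\sim x_2,\ y_1\sim y_2)\in R$ with $y_1\sim y_2$ sitting at the high fork of $\myTt$, far above $x_1\sim x_2$; that height gap is where the $3$ comes from. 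So in spirit you and the paper agree on where the obstruction lives.

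The genuine gap is at the step you yourself flag as ``the main obstacle'': the bound $|S_j - S/k| < C$, with $C$ independent of the $a_i$, is asserted rather than derived, and it is the entire quantitative content of the lemma. Nowhere in your argument is the hypothesis $\sup_{(x,y)\in R}|f(x)-g(y)| < 3$ actually cashed out against a concrete height in the construction of Figure~\ref{fig:D1}; whether per-pair mismatches of size $<3$ can or cannot accumulate into a large sum discrepancy along a branch depends entirely on the metric details of the gadgets (how each $a_i$ is encoded as a length), which your proof never engages. Worse, your final repair---rescaling the construction so that all branch lengths are multiples of some $N \gg C$---is not available here: the lemma is a statement about the \emph{specific} trees of Figure~\ref{fig:D1}, and the companion lemma ($d_{FL}\leq 1$ for yes instances) must hold for the \emph{same} trees for the $(3-\epsilon)$-inapproximability corollary to follow. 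Any rescaling has to be built into the construction up front and both lemmas verified against it; introducing it mid-proof of the no-instance direction proves a statement about different trees. To close the argument you would need to (i) state explicitly how $a_i$ and $S/k$ are realized as heights in $\myTo$ and $\myTt$, (ii) show that a nonzero integer discrepancy in $S_j - S/k$ translates into a matched pair whose function values differ by at least $3$ (or that condition (3) forces such a pair via the LCA of two split leaves), and only then conclude.
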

\begin{proof}
If UNRESTRICTED-PARTITION is a no instance, as edges with the length of $A$ are too large, we have to find a correspondence $R$ such that for any pair of points $x_1, x_2\in\myTo$ such that $x_1\|x_2\footnote{$x_1\|x_2$ if $x_1 \nleq x_2$ nor $x_2\nleq x_1$}$, there are two different points $y_1, y_2\in\myTt$ such that $(x_1, y_1)\in R $, and $(x_2, y_2)\in R$. Therefore, the best correspondence that I can find with the conditions of the definition \ref{def1} is that $x_1 \sim x_2$ map to two different point $y_1$ and $y_2$ as shown in Figure \ref{fig:D3}. Which indicates that the Frechet distance between $\myTo$ and $\myTt$ cannot be smaller than 3.
\end{proof}
\begin{figure}[htbp]
\begin{center}
\includegraphics[height=6cm]{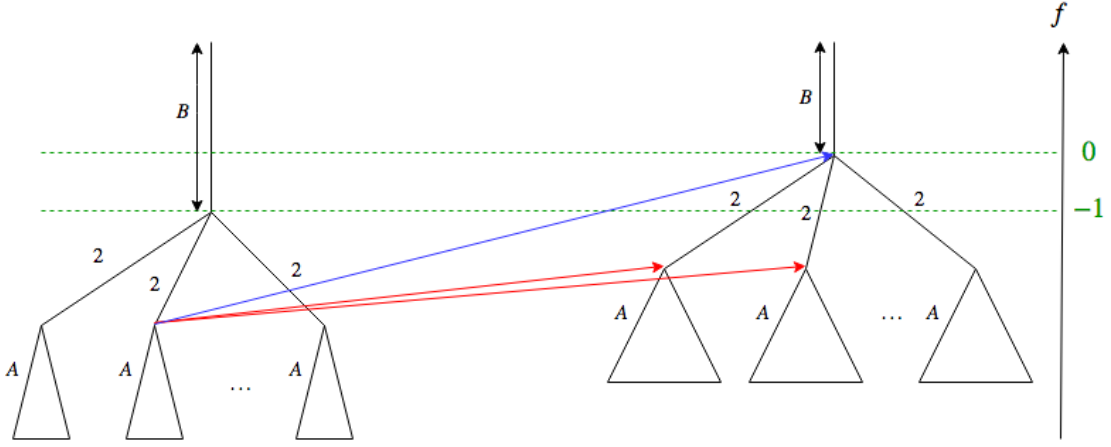}
\end{center}
\vspace*{-0.15in}
\caption{if UNRESTRICTED-PARTITION is a yes instance.
\label{fig:D3}}
\end{figure}

From two mentioned lemmas, we can conclude the following result.  
\begin{corollary} 
Computing a $(3-\epsilon)$-approximation of the Frechet-Like distance between two merge trees $\myTo$ and $\myTt$ is NP-complete,
\end{corollary}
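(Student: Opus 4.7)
The plan is to derive the corollary as a standard gap-reduction argument built on top of the two preceding lemmas. Given an instance $\mathcal{X}$ of UNRESTRICTED-PARTITION, the construction of $\myTo$ and $\myTt$ described above in Figure~\ref{fig:D1} is carried out in polynomial time. By the first lemma, a yes-instance of UNRESTRICTED-PARTITION yields $d_{FL}(\myTo,\myTt) \leq 1$, while by the second lemma a no-instance yields $d_{FL}(\myTo,\myTt) \geq 3$. Thus the construction exhibits a multiplicative hardness gap of exactly $3$ between yes- and no-instances.

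Next I would argue by contradiction. Suppose that for some fixed $\epsilon > 0$ there is a polynomial-time algorithm $\mathcal{A}$ that, on input $(\myTo,\myTt)$, returns a value $\widetilde{d}$ satisfying
\[
d_{FL}(\myTo,\myTt) \;\leq\; \widetilde{d} \;\leq\; (3-\epsilon)\, d_{FL}(\myTo,\myTt).
\]
On a yes-instance one then has $\widetilde{d} \leq (3-\epsilon)\cdot 1 < 3$, and on a no-instance $\widetilde{d} \geq d_{FL}(\myTo,\myTt) \geq 3$. Thresholding the output of $\mathcal{A}$ at $3$ therefore decides UNRESTRICTED-PARTITION in polynomial time, contradicting the SNP-completeness (hence NP-hardness) established in the preceding theorem. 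It follows that no such $(3-\epsilon)$-approximation algorithm exists unless $\mathrm{P} = \mathrm{NP}$.

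The only delicate point is verifying that the two lemmas are both being applied to the \emph{same} construction with the same parameters. The constants $A$ and $B$ in Figure~\ref{fig:D1} must be chosen polynomially in the input size of $\mathcal{X}$ so as to preserve a polynomial-time reduction, yet large enough (for instance, strictly larger than $\sum_i a_i$) to force the structural behavior exploited in the no-case analysis, namely that no correspondence can pair two incomparable (sibling) branches of $\myTo$ with a single branch of $\myTt$ at cost less than $3$. This compatibility of $A, B$ with both lemmas is the main thing to double-check, but the construction is set up precisely so that any polynomial choice of $A, B$ exceeding $\sum_i a_i$ works, so the obstacle is more notational than mathematical.
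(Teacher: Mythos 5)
Your proposal is correct and follows essentially the same route as the paper, which simply invokes the two preceding lemmas to obtain the gap of $1$ versus $3$ and concludes the $(3-\epsilon)$-inapproximability by the standard thresholding argument you spell out. Your added care about choosing $A$ and $B$ polynomially bounded yet large enough for the no-instance analysis is a worthwhile explicit check that the paper leaves implicit, but it does not change the substance of the argument.
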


\section{Frechet-Like distance and the interleaving distance}\label{d}
In this section we define a Frechet-Like distance between two merge trees, which we call Frechet-Like distance between merge trees. Given two merge trees $\myTo$ and $\myTt$ rooted at $u$ and $v$ respectively, the definition of  Frechet-Like distance  is defined as the minimum distance between the function value of the point that a man is and the function value of the point that the monitored man is. In the following we write the mathematical definition of modified Frechet-Like distance between two merge trees:

\begin{definition}\textbf{Frechet-Like Distance}\label{def1}

For two given merge trees $\myTo$ and $\myTt$, we define Frechet-Like distance as follows: 
\begin{equation*}
    d_{FL}^M(\myTo,\myTt):= \min_{R\in \mathcal{R}} \sup_{(x,y)\in R} |f(x)-g(y)|
\end{equation*}
and the correspondence $R \subseteq \uTo \times \uTt$ is defined like the Definition \ref{dI1}.

\end{definition}

By the following lemma, we prove the relation between the Frechet-like distance and the interleaving distance between merge trees.
\begin{lemma}
If there exists an $\varepsilon$ such that $d_{FL}^M(\myTo, \myTt)\leq\varepsilon$, then
$d_I(\myTo, \myTt)\leq\varepsilon$.
\end{lemma}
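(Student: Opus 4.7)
The plan is to take any correspondence $R \subseteq |\myTo| \times |\myTt|$ realizing $d_{FL}^M(\myTo, \myTt) \leq \varepsilon$, so every $(x,y) \in R$ satisfies $|f(x)-g(y)| \leq \varepsilon$, and from it build a pair of $\varepsilon$-compatible maps $\alpha_\varepsilon : |\myTo| \to |\myTt|$ and $\beta_\varepsilon : |\myTt| \to |\myTo|$ in the sense of Definition \ref{dI1}. The construction is an ``ancestor lift'': for each $u \in |\myTo|$ pick any $y$ with $(u,y) \in R$; the height bound gives $g(y) \leq f(u) + \varepsilon$, so $y$ has a unique ancestor $y^{\ast}$ in $\myTt$ at height exactly $f(u) + \varepsilon$, and I set $\alpha_\varepsilon(u) := y^{\ast}$. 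Define $\beta_\varepsilon$ symmetrically.

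First I would establish well-definedness and continuity. If $(u, y_1), (u, y_2) \in R$, then condition (3) of the correspondence yields $(u, y_1 \sim y_2) \in R$ where $y_1 \sim y_2$ is the LCA, so $g(y_1 \sim y_2) \leq f(u) + \varepsilon$; this forces $y_1$ and $y_2$ to share an ancestor no higher than $f(u)+\varepsilon$, and hence their unique ancestors at exactly that height coincide. Continuity then follows from the path-filling conditions (2-i) and (2-ii): as $u$ moves along a root-path, one can choose an accompanying continuous curve $y(u)$ with $(u, y(u)) \in R$, and the ancestor of $y(u)$ at height $f(u)+\varepsilon$ varies continuously in $\myTt$. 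With well-definedness and continuity in hand, clauses (1) and (2) of Definition \ref{dI1}, namely $g(\alpha_\varepsilon(u)) = f(u) + \varepsilon$ and $f(\beta_\varepsilon(v)) = g(v) + \varepsilon$, are immediate. For clause (3), condition (2-ii) applied along the arc from $y$ up to $y^{\ast} = \alpha_\varepsilon(u)$ supplies some ancestor $u'$ of $u$ with $(u', y^{\ast}) \in R$; the bound $f(u') \leq g(y^{\ast}) + \varepsilon = f(u) + 2\varepsilon$ places $u'$ on the root-path of $u$, so the ancestor of $u'$ at height $f(u)+2\varepsilon$ is precisely $u^{2\varepsilon}$, and well-definedness makes the output independent of the choice of $u'$. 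Clause (4) follows symmetrically.

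The main obstacle I expect is clause (3) of Definition \ref{dI1} in its full strength: it requires $\beta_\varepsilon \circ \alpha_\varepsilon$ to identify all same-height points whose $2\varepsilon$-ancestors agree, and thereby forces $\alpha_\varepsilon$ to collapse distinct branches of $\myTo$ whose images in $\myTt$ sit below a common ancestor $\varepsilon$ higher up. Showing this identification should reduce again to the LCA-closure in condition (3) of Definition \ref{def1}, but making the collapse compatible with continuity at branching nodes of $\myTo$, and being precise about what happens when the fiber $\{\, y : (u,y) \in R \,\}$ is not a singleton at a branch point $u$, is where the proof is most likely to hide technical subtleties.
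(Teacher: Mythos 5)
Your core construction --- lifting each $x$ with $(x,y)\in R$ to the unique ancestor of $y$ in $\myTt$ at height $f(x)+\varepsilon$ --- is exactly the construction in the paper's proof, but you then take a genuinely different route to the conclusion. The paper builds only the single map $\alpha_\varepsilon$, verifies the four conditions (C1)--(C4) of an $\varepsilon$-good map (Definition \ref{ID}) --- in particular the ancestor-monotonicity condition (C3) via a case analysis on $x_1\leq x_2$ versus $x_1\| x_2$, and the condition (C4) on points outside the image --- and then invokes the cited theorem that the existence of an $\varepsilon$-good map is equivalent to $d_I(\myTo,\myTt)\leq\varepsilon$. You instead construct both $\alpha_\varepsilon$ and $\beta_\varepsilon$ and check $\varepsilon$-compatibility directly against Definition \ref{dI1}, trading (C3)/(C4) for the composite conditions of the form $\beta_\varepsilon\circ\alpha_\varepsilon(u)=u^{2\varepsilon}$; your derivation of these from the LCA-closure (condition 3 of Definition \ref{def1}) and the interval-filling conditions (2-i)/(2-ii) is sound, and your well-definedness argument via $(u,\,y_1\sim y_2)\in R$ addresses a point the paper passes over silently. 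What your route buys is self-containedness (no appeal to the good-map equivalence theorem) and symmetry between the two trees; what the paper's route buys is having to construct and analyze only one map. Both treatments leave continuity at the same level of rigor --- you appeal to a path-lifting heuristic, the paper defers to the method of its reference --- so your proposal matches the standard of the paper's own argument while following a different characterization of the interleaving distance.
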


\begin{proof}
For proving this lemma we need to find an $\varepsilon$-good map $\alpha_\varepsilon: \uTo\longrightarrow \uTt$ such that three conditions in the definition of $\varepsilon$-good map are satisfied. First, we consider the $\varepsilon$-good map $\alpha^\varepsilon$ as follows: 

As the Frechet-Like distance between $\myTo$ and $\myTt$ is not greater than $\varepsilon$, based on the definition \ref{def1} there is a correspondence $R$ such that four conditions of the definition \ref{def1} are satisfied. Now for constructing of the $\varepsilon$-good map for any pair of points $(x,y)\in R$ if $g(y) = f(x) + \varepsilon$, we map the point $x$ to $y$, in another words $\alpha^\varepsilon(x) = y$. Otherwise, if $g(y) < f(x) + \varepsilon$, we map $x$ to a point $y'$ such that $y\leq y'$ and $g(y') = f(x) + \varepsilon$, it means that $\alpha_\varepsilon(x) = y'$. 

Now, we need to prove that $\alpha_\varepsilon$ is an $\varepsilon$-good map. To do so, we need to prove that four conditions of the definition \ref{ID} for the map $\alpha_\varepsilon$ are satisfied. 
\\
{\bf{C1.}} We need to prove that map $\alpha_\varepsilon$ is continuous. To do so, we use the similar method as is written in \cite{Fafcghaidbt}. 
\\
{\bf{C2.}} Based on the construction of the map $\alpha^\varepsilon$ for any pair of points $(x,y)\in R$ we map $x$ to a point which is $\varepsilon$ distance higher than $x$. As for all the point $x$ in $|\myTo|$ there is at least one $y$ such that $(x,y)\in R$, we can conclude that for all the point $x$ in $|\myTo|$, $g(\alpha^{\varepsilon}(x))=f(x)+ \varepsilon$, which satisfies the condition (C2) of the Definition \ref{ID}.
\\
{\bf{C3.}} If two pairs of points $(x_1, y_1)\in R$ and $(x_2, y_2)\in R$, and $y_1\leq y_2$, we know that $g(y_1)\leq g(y_2)$. Therefore, based on the construction of the map $\alpha^\varepsilon$, we have that $f(x_1)\leq f(x_2)$. Two cases can happen: 

Case1: $x_1\leq x_2$, which in this case we have that $x_1^{2\epsilon}\leq x_2^{2\epsilon}$. 

Case2: $x_1\| x_2$, in this case if by contradiction $x_1^{2\epsilon}\nleq x_2^{2\epsilon}$, therefore we have that $x_1^{2\epsilon}\| x_2^{2\epsilon}$ as $f(x_1)\leq f(x_2)$. Based on the definition of Frechet-Like distance the highest $y$ such that $(x_2,y)\in R$ is $y_2$. Therefore by using the condition 3 of the Frechet-Like distance the highest point $y in \myTt$ that $(x_1\sim x_2,y)\in R$ is $y_2$ and $f(x_1\sim x_2) - g(y_2)>\varepsilon$. It is a contradiction with the fact that the Frechet-Like distance between $\myTo$ and $\myTt$ is less than or equal to $\varepsilon$.
\begin{figure}[htbp]
\begin{center}
\includegraphics[height=8cm]{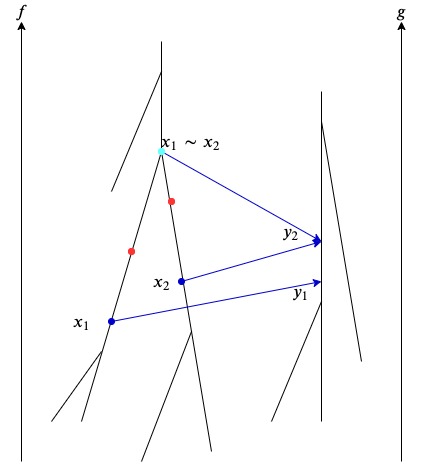}
\end{center}
\vspace*{-0.15in}
\caption{
\label{fig:case2}}
\end{figure}
\\
{\bf{C4.}} If there is a point $y\in \myTo$ such that there is no $x\in \myTt$ map to $y$ under the map $\alpha_\varepsilon$, as we already proved in C1 that the map is continuous, the point should be a branch connects a leaf (For example $y^L$) to the tree, and none of  the point $y'\leq y$ are in the image of the map $\alpha_\varepsilon$. 
Now, I just need to prove that $g(y^F)-g(y)\leq 2\varepsilon$. By contradiction if $g(y^F)-g(y)> 2\varepsilon$ and $x$ is the point that $(x, y^F)\in R$ based on the definition of Frechet-Like distance condition 4, $x$ is a leaf. Therefore, $(x,y^L)\in R$ and it is a contradiction by the fact that  $d_{FL}^M(\myTo, \myTt)\leq\varepsilon$.
\end{proof}

\section{Concluding Remarks}\label{C}
In this paper I extended the Frechet distance between two curves to Frechet-Like distance between rooted trees. In section \ref{DbT}, I discussed some distances that have been defined between two trees. I defined a new definition for computing the similarity between two trees in Section \ref{DbT}. I called the new distance, Frechet-Like distance because of the similarity of the definition to Frechet distance between curves. The hardness of approximation was discussed later in Section \ref{AoFLDiiN}. Here, we also proved that although there is a polynomial time algorithm for computing the Frechet distance between polygonal curves \cite{CtFdbtpc}, it is NP-hard to approximate Frechet-Like distance between two trees. The relation between Frechet-Like distance between two merge trees and the interleaving distance was discussed in section \ref{d}.
\bibliography{bibil}
% \appendix

\end{document}